\newcommand{\abold}{\bm{\alpha}}
\newcommand{\ncal}{\mathcal{N}}
\newcommand{\mcal}{\mathcal{M}}
\newcommand{\rcal}{\mathcal{R}}
\newcommand{\mm}{\mcal^{-}}
\newcommand{\uhat}{\hat{U}}
\newcommand{\uopt}{U^*}
\newcommand{\pinew}{\bar{\pi}}
\newcommand{\rmax}{r_{\text{max}}}
\newcommand{\E}{\mathbb{E}}
\newcommand{\sbar}{\bar{s}}
\newtheorem{theorem}{Theorem}
\newtheorem{lemma}{Lemma}
\newtheorem{definition}{Definition}
\newtheorem{assumption}{Assumption}
\title{\LARGE \bf Corrected: On Confident Policy Evaluation for Factored Markov Decision Processes with Node Dropouts
}
\author{Carmel Fiscko$^{1\dagger}$, Soummya Kar$^{1}$, and Bruno Sinopoli$^{2}$
\thanks{$^{1}$Carmel Fiscko and Soummya Kar are with the Dept. of Electrical and Computer Engineering at Carnegie Mellon University in Pittsburgh, PA. {\tt\small cfiscko@andrew.cmu.edu, soummyak@andrew.cmu.edu}}%
\thanks{$^{2}$Bruno Sinopoli is with the Dept. of Electrical and Systems Engineering at Washington University in St. Louis, MO. {\tt\small bsinopoli@wustl.edu }}%
\thanks{$^{\dagger}$This material is based upon work supported by the National Science Foundation Graduate Research Fellowship Program under Grant No. DGE1745016. Any opinions, findings, and conclusions or recommendations expressed in this material are those of the author(s) and do not necessarily reflect the views of the National Science Foundation. Additional support provided by the Hsu Chang Memorial Fellowship in ECE.}%
}
\begin{document}

\maketitle
\thispagestyle{empty}
\pagestyle{empty}

\begin{abstract}

In this work we investigate an importance sampling approach for evaluating policies for a structurally time-varying factored Markov decision process (MDP), i.e. the policy's value is estimated with a high-probability confidence interval. In particular, we begin with a multi-agent MDP controlled by a known policy but with unknown transition dynamics. One agent is then removed from the system - i.e. the system experiences node dropout - forming a new MDP of the remaining agents, with a new state space, action space, and new transition dynamics. We assume that the effect of removing an agent corresponds to the marginalization of its factor in the transition dynamics. The reward function may likewise be marginalized, or it may be entirely redefined for the new system. Robust policy importance sampling is then used to evaluate candidate policies for the new system, and estimated values are presented with probabilistic confidence bounds. This computation is completed with no observations of the new system, meaning that a safe policy may be found before dropout occurs. The utility of this approach is demonstrated in simulation and compared to Monte Carlo simulation of the new system.

\end{abstract}

\section{Introduction}

Finding optimal control policies is a key goal of reinforcement learning (RL). While simple systems are easy to consider, the ``curse of dimensionality" means that as the state and action spaces grow, it takes longer and longer to evaluate the goodness of a candidate policy. It is therefore important to investigate methods that accelerate learning, such as leveraging systems' internal structure to reduce the scope of the problem.

One such method to handle complex state spaces is the factored or transition-independent MDP \cite{guestrin2003efficient}, in which the state and/or action spaces are expressed as the Cartesian product of some smaller set. This structure extends into the transition kernel, in which each state-action to state transition may be factored across the state space definition. This structure reduces the scope of the state and policy spaces, therefore accelerating the learning process and aiding efficient exploration \cite{osband2014near}. One application in which factored MDPs arise is in describing some multi-agent MDPs or stochastic games \cite{guestrin2001multiagent}, \cite{yang2020overview}, \cite{bucsoniu2010multi}. The MDP formulation may accomplish centralized control of a multi-agent system. If each agent moves within some set of individual states, the system-wide state may be described as the collection of individual agent states, i.e. the joint action of the agents. 

An issue arising in MDPs that describe natural processes (e.g. intelligent agents) is that many theoretical guarantees rely on an assumption of time-invariance. Fewer methods have tackled the case when the transition matrix may change over time \cite{papoudakis2019dealing}, \cite{ornik2019learning}. For example, the transition matrix may be assumed to be fixed over some time window, and an optimal control policy for that time window may be found. In the factored regime with an exponentially growing state space, this model-based approach may be intractable. The model-free regime gives a more flexible approach to time-varying systems, as estimated Q-functions may be constantly updated for the newest data based to application-specific heuristics. This technique lacks strong theoretical guarantees, however, as a necessary condition for convergence requires infinite observations of all state-action pairs \cite{sutton2018reinforcement}. 

We consider a different notion of time-varying: changes to the agent structure, specifically agent dropout. This phenomenon happens when, after some time has passed under normal operation, an agent leaves the system. For example, consider a group of computers executing a distributed sensing or learning scheme. At any moment a node may lose internet connectivity, may be corrupted due to an attack, or may be physically damaged. In a financial network, new businesses emerge and others go out of business, and in social media users constantly come and go.

Node dropout is often viewed as link failure in a communication graph between agents. In distributed settings where agents work together to learn a value function, update protocols have been developed that are robust to such link failures and stochastic networks \cite{kar2013cal}, \cite{zhang2018fully}. A general necessary assumption is that the graph is fully connected on average, which ensures convergence of the agents' value estimates. This work focuses on a related but different formulation: we consider a dynamically changing set of agents, where agents are assumed to have a long-term notion of joining or leaving the group. In this case, the assumption of average full connectivity over time will be violated. We therefore consider adding or removing an agent as the initialization of a \emph{new} MDP with a re-defined the state space, action space, reward function, and transition dynamics.

In this case of structural change, the central controller will want to know how to best act on the \emph{new} system, having only observed the \emph{old} system. If agents are added to the system, their transition behavior will be unknown so observations will be needed to form a policy. In the case of node dropout, however, the controller should be able to leverage knowledge of the old system in finding a policy for the new system. The reward function of the new system may reflect a similar goal to the original system, or it may be completely redefined. For example, the system may activate a ``safety mode," in that the controller wants high confidence of a minimum value after dropout occurs. While the new system can be solved with the aforementioned techniques for time-varying MDPs, it will be costly in time to restart learning from a fresh initialization, and may be unsafe to perform exploration in the new space. 

The goal of this paper is to perform high-confidence policy evaluation for the post-dropout system by using samples produced by the pre-dropout system. We assume that the post-dropout MDP retains structure reflective of its pre-dropout counterpart, including that the new transition kernel marginalizes the removed agent from the original transitions. We evaluate candidate policies by using importance sampling (IS) techniques on trajectories produced by the original system \cite{sutton1999between}, but with sample returns mapped to the reward function of the new system. The innovation here is that IS is used to evaluate policies for a \emph{different} MDP from the one that generated the observed data. This differs from standard policy IS techniques which are used to evaluate policies for the \emph{same} MDP that generated the data. Estimation techniques for two structures of reward function are then presented: (1) rewards marginalized with respect to the removed agent, and (2) rewards newly defined in the reduced agent space. Confidence bounds for the two methods are established, and tradeoffs between sample trajectory length and number of needed trajectories are discussed. 

Section \ref{setup} discusses the preliminaries of the multi-agent system setup. Section \ref{method} describes policy importance sampling, and discusses how to apply the estimators to the transformed MDP. Section \ref{analysis} presents the error bound of the estimator, and a simulation is shown in Section \ref{exp}.

\section{Problem Setup} \label{setup}
\subsection{Preliminaries}
Consider a set of agents (factors) $\mathcal{N}=\{1,\dots,N\}$. Let there be a factored MDP described by the tuple $\mathcal{M}=(\mathcal{S}, \mathcal{A}, R, T, \gamma)$. The state space factors across the agents as $\mathcal{S}=S_1\times\dots\times S_N$ where one state describes one combination of substates across all agents $s=[s_1,\dots,s_N]$. 
Any state with a set subscript as in $s_{\mathcal{X}}$ references the elements of the state $s$ indexed by the set $\mathcal{X}$; for example, $s_n$ means the substate of agent $n$ within the greater state. The action space is likewise factored $\abold\in \mathcal{A}=A_1\times\dots\times A_N$ where one action describes the action assigned by the central controller to each agent $\abold=[\alpha_1,\dots,\alpha_N]$. The reward function is defined to give a bounded deterministic scalar reward for each state-action pair. Furthermore the reward function satisfies the separable structure $r(s,\abold)=\sum_{n\in\ncal} r_n(s_n,\alpha_n)$, which corresponds to a summation across agent-specific rewards. Next, the transition kernel defines the state-action to state transition probabilities, and is assumed to be time-homogeneous and satisfy the factored Markovian form $P(s'|s,\abold) = \prod_{n\in\ncal} P(s_n'|s,\alpha_n)$. This factorization structure means that each agent has an independent transition when conditioned on the current state and action, and each agent is only dependent on the action assigned to it from the central controller. We will assume that the transition kernel satisfies the factored structure, but the exact probabilities will be unknown to the central controller. 
 The final component $\gamma\in(0,1)$ is a scalar discount factor that diminishes future rewards.

The goal of an MDP is to solve for a control policy $\pi: \mathcal{S}\to\mathcal{A}$. The \emph{value} of a policy $\pi$ is the expected reward,
\begin{equation}
    V^{\pi}_H(s) \triangleq\mathbb{E}_{\pi}\left[ \sum_{t=1}^{H}\gamma^{t-1}r(s_t,\abold_t)\vert s_0=s\right],
\end{equation}
where $H$ is a time horizon. The value for an infinite time horizon is attained by $V^{\pi}(s)\triangleq\lim_{H\to\infty} V^{\pi}_H(s)$. An optimal policy for an infinite time horizon, therefore, is a maximizer $\pi^*\in\text{argmax}_{\pi}V^{\pi}(s)$. As we consider a factored MDP, the policy can be independently factored across the agents given a state \cite{fiscko2022cluster}.

\subsection{Dropout}
The controller acts on $\mcal$ with some policy until the system experiences \emph{node dropout}, whereby agent $N$ leaves the system. The new MDP with agents $\ncal^{-}=\{1,\dots,N-1\}$ is referred to $\mm$. The updated state and action spaces become $S_1\times\dots\times S_{N-1}$ and $A_1\times\dots\times A_{N-1}$. A single state will be denoted by $\sbar$ and an action by $\abold^-$. Note that any agent may be removed, but the index $N$ is assumed without loss of generality (WLOG). 

As to not conflate the policies for the original and new systems, the policy for $\mcal$ will be denoted as $\pi$, and the policy for $\mm$ will be $\phi$. Values of the new system will likewise be denoted as $U$. The rest of this section will discuss the transition kernel and reward function for $\mm$.


Specifying the transition kernel of the new system is a key step as it directly informs the new optimal policy. A simple option is to pretend $s_N$ is fixed at its last observed value for all time; this simply uses the transition matrices of the original system with the fixed $s_N$, and a policy may be found through standard RL techniques. While this is a straight-forward method, this assumed transition structure will not be the universal best choice, as fixing an agent's supposed action is not equivalent to not observing the agent at all.

In this paper, it will be assumed that the effect of agent dropout on the transition probabilities will be equivalent to a marginalization of the agent. To make this assumption, we first need to assert existence of a unique stationary distribution to ensure the marginalization is well-defined. 

\begin{assumption} \label{ergodic}
Systems $\mcal$ and $\mm$ are both ergodic Markov chains under any fixed policy. 
\end{assumption}

Ergodicity ensures the uniqueness of the stationary distribution, and is a standard assumption for estimation of Markov chains.

\begin{assumption}\label{marg}
The state-action-state transition probabilities of system $\mm$ are equal to the marginalization of agent $N$ from the transition probabilities of $\mcal$ under $\pi$.
\end{assumption}
\begin{equation}
    P(s_n'|\sbar,\abold^-_n)=\E_{s_N}[P(s_n'|s,\abold_n)]. \label{agent marg}
\end{equation}
Note that due to the independent agent factorization of the transition matrix, it can be shown that \eqref{agent marg} implies,
\begin{equation}
    P(\sbar'|\sbar,\abold^-) = \mathbb{E}_{s_N, s_N',\alpha_N}[P(s'|s,\abold)].
\end{equation}

Assumption \ref{marg} describes applications in which agents account for the removal of agent $N$ by averaging out the effect of its substate. Note that the expectation is taken with respect to the stationary distribution of agent $N$'s substate, which in turn is dependent on the exerted policy $\pi$. This assumption intuitively means that the other agents absorb the effects of agent $N$ based on their overall observations of its behavior. An example of node dropout is shown in Figure \ref{dbn} drawn as a truncated dynamic Bayesian network (DBN) \cite{guestrin2001multiagent}. By this formulation node dropout can also be thought of as a time-varying connectivity graph between agents. 

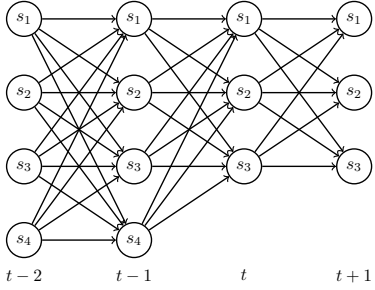
\begin{figure}
\centering
 \resizebox{.6\linewidth}{!}{
\begin{tikzpicture}[node distance={15mm}, thick, main/.style = {draw, circle}] 
\node[main] (1) {$s_1$}; 
\node[main] (2) [below of=1] {$s_2$}; 
\node[main] (3) [below of=2] {$s_3$}; 
\node[main] (4) [below of=3] {$s_4$};
\node[draw=none] (9) [below = 0.1cm of 4] {$t-2$};
\node[main] (5) [right = 1.5cm of 1] {$s_1$}; 
\node[main] (6) [right = 1.5cm of 2] {$s_2$};
\node[main] (7) [right = 1.5cm of 3] {$s_3$};
\node[main] (8) [right = 1.5cm of 4] {$s_4$};
\node[draw=none] (10) [below = 0.1cm of 8] {$t-1$};

\node[main] (11) [right = 1.5cm of 5] {$s_1$}; 
\node[main] (12) [right = 1.5cm of 6] {$s_2$};
\node[main] (13) [right = 1.5cm of 7] {$s_3$};
\node[draw=none] (14) [right = 1.5cm of 8] {};
\node[draw=none] (15) [below = 1.6cm of 13] {$t$};
\node[main] (16) [right = 1.5cm of 11] {$s_1$}; 
\node[main] (17) [right = 1.5cm of 12] {$s_2$};
\node[main] (18) [right = 1.5cm of 13] {$s_3$};
\node[draw=none] (19) [right = 1.5cm of 14] {};
\node[draw=none] (20) [below = 1.6cm of 18] {$t+1$};

\draw[->] (1) -- (5); 
\draw[->] (1) -- (6); 
\draw[->] (1) -- (7); 
\draw[->] (1) -- (8); 
\draw[->] (2) -- (5); 
\draw[->] (2) -- (6); 
\draw[->] (2) -- (7); 
\draw[->] (2) -- (8); 
\draw[->] (3) -- (5); 
\draw[->] (3) -- (6); 
\draw[->] (3) -- (7); 
\draw[->] (3) -- (8); 
\draw[->] (4) -- (5); 
\draw[->] (4) -- (6); 
\draw[->] (4) -- (7); 
\draw[->] (4) -- (8); 

\draw[->] (5) -- (11); 
\draw[->] (6) -- (11); 
\draw[->] (7) -- (11); 
\draw[->] (8) -- (11); 
\draw[->] (5) -- (12); 
\draw[->] (6) -- (12); 
\draw[->] (7) -- (12); 
\draw[->] (8) -- (12); 
\draw[->] (5) -- (13); 
\draw[->] (6) -- (13); 
\draw[->] (7) -- (13); 
\draw[->] (8) -- (13); 

\draw[->] (11) -- (16); 
\draw[->] (11) -- (17); 
\draw[->] (11) -- (18); 
\draw[->] (12) -- (16); 
\draw[->] (12) -- (17); 
\draw[->] (12) -- (18); 
\draw[->] (13) -- (16); 
\draw[->] (13) -- (17); 
\draw[->] (13) -- (18); 
\end{tikzpicture} }
\caption{This example network demonstrates dropout of agent 4 at time $t$. The set of times $\{\dots, t-2, t-1\}$ is the pre-dropout era and $\{t, t+1,\dots\}$ is the post-dropout era. The arrows represent the structure of the agent-specific transitions, i.e. $P(s_n'|s_1, s_2, s_3, s_4, \alpha)$ pre-dropout and $P(s_n'|s_1, s_2, s_3, \alpha)$ post-dropout.}
\label{dbn}
\end{figure}

The final step is to update the definition of the reward function. In this work, we assume that the central controller chooses one of the following reward regimes:


\textbf{Reward Option 1: Marginalized.} If the controller wishes to preserve the reward structure of $\mcal$ when moving to $\mm$, the reward function can be updated according to Assumption \ref{marg}. The new reward function may be defined as the marginalization of $s_N$, where  $\bar{r}(\sbar,\abold^-)=\mathbb{E}_{s_N,\alpha_N}[r(s,\abold)]$.

This reward structure may be used when the reward function is inherent to the problem, and the controller wishes to preserve the specified goodness of states. This may also be useful when the controller wants to set a general rule to account for the new reward function, which may be updated automatically regardless of which agent drops out.

\textbf{Reward Option 2: Refreshed.} The controller may want to define a wholly new reward function when moving to the new MDP. For example, the MDP may enter a ``safety mode" if it loses an agent, and wants to control for different behavior until the agent may be re-added. We still assume that the rewards are additive across the new agent set $\mathcal{N}^-$, deterministic, and finite. The final reward function for $\mm$ will be denoted as $\bar{r}(\sbar,\abold^-)$, and the mapping $\mathcal{R}$ will translate $r(s,\abold)\to \bar{r}(\sbar, \abold^-)$.

\subsection{Dataset}
The last assumption, and last part of the problem setup, describes the type of data available to the controller. While observing the system, the controller collects sample trajectories of the form $(s,\abold, s')$ produced under $\pi$ to form the dataset $D$. Here the policy may be the optimal $\pi^*$, but that is not a requirement.

\begin{assumption}\label{known}
Assume access to a dataset $D$ consisting of $|D|$ trajectories of system $\mcal$ produced by policy $\pi$. The dataset is of the form $(s,\abold,s')_t^i$ where $t$ indexes time and $i$ refers to the trajectory. The specific generating transition kernel, i.e. $P(s'|s,\abold)$, is not assumed to be known.  
\end{assumption}



\section{Method}\label{method}
Given the problem setup, the objective is to find a good policy $\phi$ for $\mm$ by analyzing the data produced by $\mcal$. The controller would desire that the proposed policy $\phi$ will yield a minimum value with high confidence. This computation must be completed without any observations of $\mm$, as the new system should be safely controlled at its inception.

The general method to find a good policy is to evaluate a set of policies and select the best performer. As suggested by \cite{thomas2015high}, the ``best performer" may be defined as the policy with the best lower confidence bound on its estimated return. Our focus is thus to evaluate proposed policies, and to return a high confidence error bound on their values. 

First, note that under assumptions  \ref{ergodic}, \ref{marg}, and the reward structures, the following relationship may be drawn between the value functions of the two MDPs:


\begin{lemma} The value of system $\mm$ under policy $\pi^-=\E_{s_N,\alpha_N}\pi(\abold|s)$ with marginalized rewards is, \label{values}
\begin{equation}
    U^{\pi^-}(\sbar) = \E_{s_N,\alpha_N}[V^{\pi}(s)].
\end{equation}
\end{lemma}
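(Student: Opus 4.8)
The plan is to establish the identity $U^{\pi^-}(\sbar) = \E_{s_N,\alpha_N}[V^{\pi}(s)]$ by comparing the two discounted-reward series term by term, using the fact that marginalization of agent $N$ commutes with the dynamics in the stationary regime. First I would write out both value functions as their defining infinite sums, $U^{\pi^-}(\sbar) = \E_{\pi^-}[\sum_{t\ge 1}\gamma^{t-1}\bar r(\sbar_t,\abold^-_t)\mid \sbar_0=\sbar]$ and $V^{\pi}(s) = \E_{\pi}[\sum_{t\ge 1}\gamma^{t-1}r(s_t,\abold_t)\mid s_0=s]$, and take $\E_{s_N,\alpha_N}$ of the latter, where the expectation is over the stationary distribution of agent $N$'s substate (and the induced action) guaranteed to exist and be unique by Assumption \ref{ergodic}. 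By linearity and dominated convergence (rewards are bounded, $\gamma<1$), it suffices to show that for each $t$, $\E_{\pi^-}[\bar r(\sbar_t,\abold^-_t)\mid\sbar_0=\sbar] = \E_{s_N,\alpha_N}\E_{\pi}[r(s_t,\abold_t)\mid s_0=s]$.

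The key step is an induction on $t$ showing that the marginal distribution over $\sbar_t$ induced by $\mm$ under $\pi^-$, started from the marginalized initial condition, equals the marginal over the first $N-1$ coordinates of the distribution over $s_t$ induced by $\mcal$ under $\pi$. The base case is the definition of the marginalized initial state. For the inductive step I would use Assumption \ref{marg}: the one-step kernel of $\mm$ is $P(\sbar'|\sbar,\abold^-) = \E_{s_N,s_N',\alpha_N}[P(s'|s,\abold)]$, and the policy $\pi^-(\abold^-|\sbar) = \E_{s_N,\alpha_N}\pi(\abold|s)$ is the corresponding marginalized policy; composing these reproduces exactly the marginal-over-$\ncal^-$ of one step of $\mcal$ under $\pi$, provided the conditioning on $s_N$ is consistently the stationary one. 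Once the marginal state (and state-action) distributions agree at every $t$, applying the reward definition $\bar r(\sbar,\abold^-) = \E_{s_N,\alpha_N}[r(s,\abold)]$ and summing over $t$ with the discount factor yields the claim.

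The main obstacle is the bookkeeping around \emph{which} distribution of $s_N$ is being averaged against at each time step. Assumption \ref{marg} is stated with the expectation over agent $N$'s \emph{stationary} substate distribution under $\pi$, but in a finite-horizon rollout from a fixed $s_0=s$ the substate of agent $N$ at time $t$ is not stationary. The honest resolution is that the claimed equality holds when the outer $\E_{s_N,\alpha_N}$ on the right-hand side is taken with respect to the same stationary distribution that defines the marginalized kernel, so that the ``averaged'' chain $\mm$ is genuinely a time-homogeneous Markov chain consistent with the averaged one-step behavior of $\mcal$ at every step; I would make this precise by carrying the stationary distribution of $s_N$ through the induction as the reference measure and verifying that Assumption \ref{marg} is exactly the statement that one step of the averaged dynamics is closed under this measure. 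The remaining manipulations — interchanging the finite/absolutely-convergent sums with the expectations, and pushing $\E_{s_N,\alpha_N}$ through the linear reward decomposition $r(s,\abold)=\sum_n r_n(s_n,\alpha_n)$ — are routine given boundedness of $r$ and $\gamma\in(0,1)$.
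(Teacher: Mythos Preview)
Your approach differs from the paper's: you propose a direct term-by-term comparison of the two discounted series via an induction on $t$, whereas the paper works through the Bellman equation. It writes $V^{\pi}(s)=\E_{\abold}\E_{s'}[r(s,\abold)+\gamma V^{\pi}(s')]$, splits the expectations by iterated expectation into $(\abold^-,\sbar')$ and $(s_N,s_N',\alpha_N)$, verifies the one-step identity $\sum_{s_N,s_N',\alpha_N}P(s'|s,\abold)p(s_N,s_N',\alpha_N)=P(\sbar'|\sbar,\abold^-)$, and then checks that $\E_{s_N,\alpha_N}V^{\pi}(s)$ satisfies the Bellman equation for $\mm$ under $\pi^-$ with marginalized rewards. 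Uniqueness of the fixed point then gives $U^{\pi^-}(\sbar)=\E_{s_N,\alpha_N}V^{\pi}(s)$ in one stroke, with no time-indexed bookkeeping.

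Your induction, as you state it, has a gap at exactly the point you flag. The inductive hypothesis ``the $\sbar_t$-marginals agree'' is too weak for the step: on the $\mcal$ side, the law of $\sbar_{t+1}$ depends on the \emph{joint} law of $(\sbar_t,s_{N,t})$, not only on the $\sbar_t$-marginal. Your proposed fix, ``carry the stationary distribution of $s_N$ through the induction,'' amounts to asserting that $s_{N,t}\mid\sbar_t\sim\mu_N$ for every $t$, which is not implied by Assumption~\ref{marg} and is generally false once $\sbar_0$ is fixed. What \emph{is} true, and what the paper's one-step calculation establishes, is the functional identity: averaging $V^{\pi}$ over $s_N\sim\mu_N$ and applying one Bellman step on $\mcal$ reproduces one Bellman step on $\mm$ applied to the same averaged function. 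That identity closes at the level of value functions, not at the level of state distributions. If you want to keep the series viewpoint, the clean repair is to induct on the tail values $\E_{s_N,\alpha_N}\E_\pi\big[\sum_{k\ge t}\gamma^{k-t}r(s_k,\abold_k)\big]$ rather than on the $\sbar_t$-laws; but that recursion is precisely the Bellman fixed-point argument the paper uses.
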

\begin{proof}
Note that by the Bellman operator and iterated expectation,
\begin{align*}
    V^{\pi}(s) &=\E_{\abold}\E_{s'}\left[ r(s,\abold)+\gamma V(s') \right],\\
    &=\E_{\abold^-}\E_{\sbar'}\left[\E_{s_N,s_N',\alpha_N}[ r(s,\abold)+\gamma V^{\pi}(s')|\abold^-,\sbar] \right],\\
    \begin{split}
        &=\E_{\abold^-}\E_{\sbar'}\big[\E_{s_N,\alpha_N}[r(s,\abold)|\abold^-,\sbar]\\
        &\quad\quad +\gamma\E_{s_N,s_N',\alpha_N}[V^{\pi}(s')|\abold^-,\sbar] \big],
    \end{split}
\end{align*}
where the outer expectations are taken with respect to $\abold^-\sim \pi(\abold^-|s)$ and $\sbar'\sim P(\sbar'|s,\abold)$. Then note that $p(s_N,s_N',\alpha_N)=P(s_N'|s_N,\alpha_N)p(s_N,\alpha_N)$ and since $P(s_n'|s,\alpha_n)$ are independent,
\begin{align*}
    &\sum_{s_N,s_{N}',\alpha_N}P(s'|s,\abold)p(s_N, s_N',\alpha_N), \\
    &=\sum_{s_N,s_{N}',\alpha_N}P(\sbar'|s,\abold)P(s_N'|s_N,\alpha_N)p(s_N,\alpha_N),\\
    &=\sum_{s_N,\alpha_N}P(\sbar'|s,\abold)p(s_N,\alpha_N)\sum_{s_{N}'}P(s_N'|s_N,\alpha_N),\\
    &=P(\sbar'|\sbar,\abold^-).
\end{align*}
Then substituting $V^{\pi}(s)\to \E_{s_N,\alpha_N}V^{\pi}(s)$ will marginalize the effects of $\alpha_N$ and $s_N$ from the policy:
\begin{align*}
&\E_{s_N,\alpha_N}V^{\pi}(s) = \sum_{\abold^-}\E_{s_N,\alpha_N}\pi(\abold|s)\big[\E_{s_N,\alpha_N}[r(s,\abold)|\abold^-,\sbar]\\
&+\gamma \sum_{\sbar'}P(\sbar'|\sbar,\abold^-)\E_{s_N',\alpha_N'}V^{\pi}(s')\big],
\end{align*}
which can be compared to the Bellman operator to verify equivalence to the value equation for $U^{\pi^-}(\sbar)$. 

\end{proof}
Note that under refreshed rewards, $\E_{s_N,\alpha_N}\bar{r}(\sbar,\abold^-)=\bar{r}(\sbar,\abold^-)$, so an equivalent result may be found by mapping $r\to\bar{r}$ before marginalization of the value function and noting that infinite applications of the Bellman operator converge to the true value.

\subsection{Limitations of the Bellman Optimality Conditions}
Given Lemma \ref{values}, a natural question is if this relationship extends to value optimality. Consider a system $\mm$ constructed with marginalized rewards. The first issue is that $\uopt(s^-)$ cannot be readily calculated as $\mathbb{E}_{s_N,\alpha_N}[V^*(s)]$. To see this, note that the Bellman optimality criterion necessitates that $V^*(s) = \mathbf{T}V^*(s) = \max_{\pi} \mathbb{E}_{\pi}[r(s,\abold)+\gamma V^*(s')]$. Therefore, if $\tilde{U}$ are optimal values, then they must satisfy:
\begin{align}
    \tilde{U}(\sbar)&=\mathbf{T}\tilde{U}(\sbar),\nonumber\\
    &=\mathbf{T}\E_{s_N,\alpha_N}V^*(s),\nonumber\\
    &=\max_{\pi^-}\E_{\pi^-}[r(\sbar,\abold^-)+\gamma \E_{s_N,\alpha_N}V^*(s')]. \label{bellman}
\end{align}

However, we find that,
\begin{align}
    \tilde{U}(\sbar)&=\E_{s_N,\alpha_N}[V^*(s)],\nonumber\\
    &=\E_{s_N,\alpha_N}\E_{\pi^*}[r(s,\abold)+\gamma V^*(s')],\nonumber\\
    &=\E_{s_N,\alpha_N}[\max_{\pi}\E_{\pi}[r(s,\abold)+\gamma V^*(s')]]. \label{margeq}
\end{align}

\normalsize
Clearly, \eqref{bellman} and \eqref{margeq} are not guaranteed to coincide as the maximization and marginalization operations do not commute. There are no guarantees that the Bellman optimality condition still holds in the new MDP space; this calculation has merely evaluated the value of the \emph{new} system under the policy $\{\pi^*_1,\dots,\pi^*_{N-1}\}$ developed for the \emph{old} system. This policy is not necessarily optimal for the \emph{new} system. Therefore, one possibility is to evaluate $V^{\pi}(s)$ for other policies $\pi$ and find one such that the principle of optimality holds for $\mm$. Solving for such a policy is difficult to do analytically, so another approach could be to evaluate several candidate policies and choose the best performing option. 

In practice, however, it is unlikely that $V^{\pi}$ has been solved for all $\pi$, as a user is most likely to have focused on $V^{\pi^*}$. In addition, this marginalization technique will not work if the reward function has been refreshed, as a new definition of the reward function will render the old $V^{\pi}$ useless. 

\subsection{Transformed Policy Importance Sampling}
The technique of policy importance sampling (IS) will be used to complete the stated objective, as it handles many of the issues raised in the previous section. Policy IS will be used to estimate infinite-horizon $U^{\phi}$ given a dataset of finite trajectories produced by system $\mcal$ under \emph{behavior policy} $\pi$. The goal is to estimate the returns of the trajectories had they been generated by $\mm$ with \emph{target policy} $\phi$. The return of a general MDP trajectory of length $H$ is defined as,

\small
\begin{equation}
    G_H \triangleq \sum_{t=1}^H \gamma^{t-1}r(s_t,\abold_t).
\end{equation}
\normalsize
By abuse of notation, the transformed reward for $\mm$ will be referred to as $\rcal(G_H)$ where,
\small
\begin{equation}
    \rcal(G_H) = \sum_{t=1}^H \gamma^{t-1} \rcal(r_t)= \sum_{t=1}^H \gamma^{t-1} \bar{r}_t.
\end{equation}
\normalsize

IS techniques were developed to evaluate new policies for the \emph{same} MDP that generated the observed samples, but in the node dropout problem we evaluate new policies on the \emph{transformed} system $\mm$. As in Assumption \ref{known}, let each trajectory $\tau=(s_1,\abold_1,r_1,\dots, s_H,\abold_H, r_H)$ be generated from MDP $\mcal$ under policy $\pi$. Consider direcly applying an IS technique to evaluate MDP $\mm$ under some new policy $\phi$. With $p$ as the joint distribution of $\tau$ under $\phi$ and $q$ the joint distribution of $\tau$ under $\pi$, the return can be expressed as,

\small{\begin{equation}
    v_H^{\phi} = \E_{\tau\sim q}\left[\frac{p(\tau)}{q(\tau)}\sum_{t=1}^H \gamma^{t-1} \rcal(r_t) \right].
\end{equation}}
\normalsize

Many estimators for $G_H^{\phi}$ have been proposed  \cite{sutton1999between} \cite{precup2000eligibility}, such as the per-trajectory IS estimator, per-step estimator, weighted estimator, and doubly robust estimator \cite{thomas2016data} \cite{jiang2016doubly}. However, these estimators cannot be applied directly because in the node dropout setting the state-action to state transition terms must be handled appropriately. Note that the ratio of $p$ and $q$ for some initial state distribution $d$ and deterministic reward is,

\small{\begin{equation}
    \frac{d(\sbar_0)\phi(\abold^-_1|\sbar_1)P(\sbar_2|\sbar_1,\abold_1^-)\dots \phi(\abold^-_H|\sbar_H)}{d(s_0)\pi(\abold_1|s_1)P(s_2|s_1,\abold_1)\dots \pi(\abold_H|s_H)},\label{ratiobad}
\end{equation}}
\normalsize

Note that the ratio in \eqref{ratiobad} cannot be evaluated as it is assumed that the transition matrix is unknown and therefore all terms $P(\sbar_{t+1}|\sbar_t,\abold^-_t)/P(s_{t+1}|s_t,\abold_t)$ are likewise unknown.

In comparison, the solution we propose is to augment policy $\phi$ with $\pi_N$ to form policy
\begin{equation}
    \phi'(\abold|s) = \phi(\abold^-|\sbar)\pi_N(\alpha_N|s_N)
\end{equation}
which is defined on pre-dropout MDP. Note that as we consider factored MDPs, then the policies for each agent (factor) may be considered independently \cite{fiscko2022cluster}. This means that $\phi$ and $\pi_N$ are independent policies and may be multiplied to yield $\phi'$. Then, the desired post-dropout policy $\phi(\abold^-|\bar{s}) = \mathbb{E}_{s_N,\alpha_N}\phi'(\abold|s)$ is simply the marginalization of the augmented policy $\phi'$. The value of policy $\phi'$ may now be analyzed via standard policy IS on trajectories generated by $\pi$ on the pre-dropout MDP, and the resulting estimated returns may be marginalized to remove $s_N$ from the system.

Under this scheme, the ratio of $p$ and $q$ becomes,

\small{\begin{equation}
    \frac{d(s_0)\phi'(\abold_1|s_1)P(s_2|s_1,\abold_1)\dots P(s_H|s_{H-1},\abold_{H-1})\phi(\abold_H|s_H)}{d(s_0)\pi(\abold_1|s_1)P(s_2|s_1,\abold_1)\dots P(s_H|s_{H-1},\abold_{H-1})\pi(\abold_H|s_H)}.
\end{equation}}
\normalsize

The transition and initial state terms will cancel such as in standard policy IS, leaving behind the importance ratio,
\begin{equation}
    \rho_t = \frac{\phi'(\abold_t|s_t)}{\pi(\abold_t|s_t)}.
\end{equation}
The following standard assumption on the generating policy is made:
\begin{assumption}\label{support}
$\phi'$ is fully supported on $\pi$, i.e. for $\abold$ such that $\phi'(\abold|s)>0$ then $\pi(\abold|s)>0$.
\end{assumption}

Note that Assumption \ref{support} refers to the generating policy, not to the observed dataset $D$. Given a finite number of samples it is likely that some state-action pairs may not be seen, but they must have a non-zero probability of occurrence. This assumption further implies that the data must not have been generated by a deterministic policy, but options like $\epsilon$-soft policies are permitted. Any standard policy IS estimator may then be used to estimate $v_H^{\phi'}$. 

The final step is to apply Lemma \ref{values} to transform the estimated value of $\phi'$ on the pre-dropout system to the estimated value of the desired policy $\phi$ on the post-dropout system. The final value for $U^{\phi}_H(s)$ may be calculated via marginalization,
\begin{equation}
    U^{\phi}_H(\sbar) = \E_{s_N,\alpha_N}[V_H^{\phi'}(s)],
\end{equation}
which in turn may be estimated as,
\begin{align}
    U^{\phi\dagger}_H(\sbar) &= \sum_{\alpha_N}\sum_{s_N} V_H^{\phi'}(s)\phi'(\alpha_N|s)\hat\mu(s_N),\nonumber\\
    &=\sum_{\alpha_N}\sum_{s_N} V_H^{\phi'}(s)\pi(\alpha_N|s)\hat\mu(s_N),\label{uhatdagger}
\end{align}
given an empirical estimate $\hat\mu(s_N)$ of the stationary state distribution. The dagger notation $\dagger$ will refer to an empirically marginalized return. 
\normalsize
\section{Analysis}\label{analysis}
The performance of the IS estimators on the transformed MDP $\mm$ are discussed in this section. For all results, Assumptions  \ref{ergodic}, \ref{marg}, \ref{known}, and \ref{support} are assumed to be satisfied.

\subsection{Marginalization}
To perform the final marginalization step, the stationary distribution $\mu(s_N)$ must be used. As this distribution is not assumed to be known, an empirical distribution $\widehat{\mu}(s_N)$ is instead used in \eqref{uhatdagger}. Under assumption \ref{ergodic} there will exist a unique stationary distribution, so $\widehat{\mu}(s_N)$ may be estimated as the number of specific substate occurrences divided by the number of total occurrences. 


To study the error of this method, concentration bounds are first stated for the empirical stationary distribution as developed by \cite{paulin2015concentration}. First it is necessary to define the \emph{mixing time}, which measures the time required by a
Markov chain for the distance to stationarity to be small \cite{levin2017markov}. 

\begin{definition}
With $ d(t) \triangleq \sup_{s\in\Omega}d_{TV}(P^t(s,\cdot),\mu)$ and $t_{\text{mix}}(\epsilon)\triangleq \min\{t:d(t)\leq \epsilon\}$, the  \emph{mixing time} is defined as:

\vspace{-4mm}
\small
\begin{gather}
    t_{\text{mix}}\triangleq t_{\text{mix}}(1/4).
\end{gather}
\end{definition}
\normalsize
Based on the mixing time, the convergence of the empirical stationary distribution may be bounded as follows:
\begin{lemma}\label{stationary} 2.19 from \cite{paulin2015concentration}. Let $\hat{\mu}(s) \triangleq \frac{1}{H}\sum_{t=1}^H \mathbf{1}[S_{t}=s]$ be the empirical distribution for state $S$. Let $d(D) \triangleq d_{TV}(\hat{\mu}(s), \mu)$ be the total variational distance between the empirical and true distributions. For any $\epsilon\geq 0$,

\small
\begin{equation}
    P(\lvert d(D)-\mathbb{E}[d]\rvert \geq \epsilon) \leq 2 \text{exp}\left(-\frac{\epsilon^2 H}{4.5t_{mix}}\right).
\end{equation}
\end{lemma}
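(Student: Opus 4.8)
Since the statement reproduces 2.19 of \cite{paulin2015concentration}, the ``proof'' is essentially the citation; below I sketch how one would reconstruct it, as it is an instance of a bounded-differences (McDiarmid-type) concentration inequality for Markov chains. The plan is to (i) identify $d(D)$ as a bounded-differences function of the trajectory $(S_1,\dots,S_H)$, (ii) compute its bounded-differences vector, and (iii) invoke the Markov-chain McDiarmid inequality of \cite{paulin2015concentration}. For (i), write $d(D)=f(S_1,\dots,S_H)$ where $f$ maps a sample path to $d_{TV}(\hat\mu,\mu)$ with $\hat\mu$ the corresponding empirical occupation measure. If two paths agree in every coordinate except $S_t$, then passing from one to the other moves a single unit of mass $1/H$ between two states, so $d_{TV}(\hat\mu,\hat\mu')\le 1/H$; by the triangle inequality for total variation, $|f(\cdot)-f(\cdot')|\le 1/H$. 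Hence the bounded-differences constants are $c_t = 1/H$ for all $t$, and $\sum_{t=1}^{H} c_t^2 = 1/H$.

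For (iii), Assumption~\ref{ergodic} ensures $\{S_t\}$ is ergodic with a finite mixing time $t_{mix}$, so the bounded-differences inequality for Markov chains from \cite{paulin2015concentration} (proved there via Marton couplings) applies to $f$. It is the i.i.d.\ McDiarmid bound with the variance proxy inflated by a factor of order $t_{mix}$; with the constants as tracked in \cite{paulin2015concentration} it reads $P(|f-\E f|\ge\epsilon)\le 2\exp(-2\epsilon^2/(9\,t_{mix}\sum_t c_t^2))$. Substituting $\sum_t c_t^2 = 1/H$ gives $P(|d(D)-\E[d]|\ge\epsilon)\le 2\exp(-\epsilon^2 H/(4.5\,t_{mix}))$, which is the claim.

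The only content-bearing step is the bounded-differences computation, and it is elementary; everything else is a black-box application, so the ``main obstacle'' is really just verifying the hypotheses of the cited inequality --- that $\hat\mu$ is built from a single trajectory of the ergodic chain (so that $t_{mix}$ is the relevant parameter) and, if the constants are to come out exactly as written, that the dependence on the law of $S_1$ has been handled in the standard way (running the chain in stationarity, or absorbing a burn-in of order $t_{mix}$). Neither of these affects the $\epsilon$-dependence of the bound.
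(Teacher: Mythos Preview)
Your proposal is correct and matches the paper's approach: the paper simply cites Paulin's result 2.19 without any proof, so the ``proof'' is indeed just the citation. Your added reconstruction sketch via the Markov-chain McDiarmid inequality with bounded-differences constants $c_t=1/H$ is accurate and is more than the paper itself provides.
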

\normalsize

This bound means that the empirical stationary distribution will converge to the true stationary distribution as the trajectory length goes to infinity. The more complicated the chain, i.e. the larger the mixing time, the more samples are needed for small errors to be achieved.

This bound may be adapted for the substate $s_N$ by using the mixing time associated with the single factor with state space $S_N$, action space $A_N$, and transition $P(s_N'|s,\alpha_N)$. The mixing time may then be bounded such as with equation 7.2 from \cite{levin2017markov}.

\subsection{Policy Evaluation}
Policy IS is used to estimate the return of policy $\phi'$ given trajectories generated by $\pi$. In our analysis no specific IS estimator is assumed, although tighter bounds may be achieved for specific estimators. Furthermore, note that the sample returns are built from finite horizon trajectories, but are used to estimate the infinite-horizon value. In the following analysis there are therefore four possible sources of error: error due to IS, error due to the finite time horizon, error due to the sample marginalization, and any bias induced by the choice of estimator. The following concentration bound may be stated.


\begin{theorem}\label{maintheorem}
Let $U^{\phi}(\sbar)$ be the true infinite horizon value of system $\mm$ under policy $\phi$. Let $\uhat^{\phi\dagger}_H(\sbar)$ be the sample average estimate of $U^{\phi}_H(\sbar)$ formed by IS and then marginalization. Let the IS estimator be constructed from $|D|-1$ i.i.d. trajectories each of length $H$ with bounded bias $|\E[\widehat{U}^{\phi}_H]-U^{\phi}_H|\leq B_{IS}(H)$. Let the empirical stationary distribution $\hat\mu(s_N)$ used to marginalize $\uhat^{\phi}_H(\sbar)$ be formed from the last trajectory in the dataset with length $H_{\mu}$. Let $\rmax\triangleq\max_{\sbar,\abold^-} \bar{r}(\sbar,\abold^-)$, and let $\epsilon'= \frac{\gamma^H}{1-\gamma}\rmax +B_{IS}(H)$. Then for $\delta\geq0$,

\small\begin{equation}
\begin{split}
    &P(|U^{\phi}(\sbar)-\uhat^{\phi\dagger}_H(\sbar)|\geq \delta+\epsilon')\\
    &\leq 2\exp\left(-\frac{(1-\gamma)^2H_{\mu}\delta^2}{9t_{mix}(1-\gamma^{H_{\mu}})^2r_{\text{max}}^2|S_N|^{2}}\right)\\
    &+ 2\exp\left(-\frac{(1-\gamma)^2(|D|-1)\delta^2}{2(1-\gamma^H)^2r_{\text{max}}^2}\right) .
    \end{split}\label{bound1}
\end{equation}

\end{theorem}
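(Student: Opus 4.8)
The plan is to bound the total error by a triangle inequality that splits it into four contributions: (i) the finite-horizon truncation error $|U^{\phi}(\sbar)-U^{\phi}_H(\sbar)|$, (ii) the IS bias carried through the marginalization against the \emph{true} stationary law $\mu(s_N)$, (iii) the fluctuation from using the empirical $\hat{\mu}$ in place of $\mu$, and (iv) the fluctuation of the IS sample average about its mean. Contributions (i) and (ii) are deterministic and assemble into the offset $\epsilon'$, while (iii) and (iv) are random; crucially, $\hat{\mu}$ is built from the last trajectory whereas the IS estimate uses the other $|D|-1$ trajectories, so these two random parts depend on independent data and can be controlled separately after splitting the budget as $\delta=\delta/2+\delta/2$.

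For the deterministic part, I would use Lemma \ref{values} in its finite-horizon form to write $U^{\phi}_H(\sbar)$ as the $\mu$-marginalization of $V^{\phi'}_H$; since the transformed rewards are bounded by $\rmax$, the discarded tail of the discounted sum gives $|U^{\phi}(\sbar)-U^{\phi}_H(\sbar)|\le\sum_{t=H+1}^{\infty}\gamma^{t-1}\rmax=\frac{\gamma^{H}}{1-\gamma}\rmax$. The assumed bias bound $B_{IS}(H)$ is preserved under marginalization against a fixed distribution (the marginalization map is linear and nonexpansive in sup-norm), so together these two account for exactly $\epsilon'=\frac{\gamma^{H}}{1-\gamma}\rmax+B_{IS}(H)$, and it remains only to show the random remainder exceeds $\delta$ with the stated probability.

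For contribution (iii), I would bound the marginalization fluctuation $\left|\sum_{s_N,\alpha_N}V^{\phi'}_H(s)\,\pi(\alpha_N\mid s)\,(\hat{\mu}(s_N)-\mu(s_N))\right|$ by $C\,d_{TV}(\hat{\mu},\mu)$ for an appropriate constant $C$ of order $\frac{1-\gamma^{H_\mu}}{1-\gamma}\rmax\,|S_N|$, using $|V^{\phi'}_H(s)|\le\frac{1-\gamma^{H_\mu}}{1-\gamma}\rmax$, $\sum_{\alpha_N}\pi(\alpha_N\mid s)=1$, and the fact that the $\ell_1$ deviation of $\hat{\mu}$ spreads over the $|S_N|$ substates; then I would invoke Lemma \ref{stationary} for the \emph{single-factor} chain of agent $N$ (state space $S_N$, transition $P(s_N'\mid s,\alpha_N)$, mixing time $t_{mix}$, trajectory length $H_\mu$) at the tolerance that keeps this fluctuation below $\delta/2$, producing the first exponential term. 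For contribution (iv), conditioning on the independent last trajectory fixes $\hat{\mu}$, so $\hat{U}^{\phi\dagger}_H(\sbar)$ becomes an average of $|D|-1$ i.i.d.\ transformed IS returns, each confined to an interval of width $\frac{1-\gamma^{H}}{1-\gamma}\rmax$; Hoeffding's inequality at tolerance $\delta/2$ then yields the second exponential term, and a union bound over the two independent sources finishes the proof.

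The step I expect to be the main obstacle is the interface between the two datasets and the marginalization: one must verify that, conditioned on the stationary-distribution sample, the IS-then-marginalize estimator is genuinely an average of bounded i.i.d.\ terms --- which is where an implicit boundedness or self-normalization of the chosen IS estimator enters, since raw importance weights $\prod_t\rho_t$ need not be bounded --- and one must apply Lemma \ref{stationary} to the correct marginal single-factor chain while absorbing the $\mathbb{E}[d_{TV}(\hat{\mu},\mu)]$ term that the concentration inequality is centered around. The rest --- the geometric-tail estimate and carrying constants through Hoeffding --- is routine.
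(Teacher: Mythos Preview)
Your proposal is correct and follows essentially the same route as the paper: a four-term triangle-inequality decomposition into truncation, IS bias, marginalization fluctuation, and IS fluctuation, with the first two assembled deterministically into $\epsilon'$, the random budget split as $\delta/2+\delta/2$, Lemma~\ref{stationary} applied to the single-factor chain for the $\hat\mu$ term, and Hoeffding on the $|D|-1$ i.i.d.\ IS returns for the remaining term. The only cosmetic difference is the order in which you insert the intermediate quantities (you swap to $\hat\mu$ after handling the IS bias, whereas the paper swaps first and treats bias at the $\dagger$-level), and the two concerns you flag---the $\mathbb{E}[d_{TV}]$ centering in Lemma~\ref{stationary} and the implicit boundedness of the IS returns needed for Hoeffding---are indeed glossed over in the paper as well.
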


\normalsize
Theorem \ref{maintheorem} gives an overall exponential confidence interval for an estimator constructed from finite trajectories subject to IS and empirical marginalization. The first error term arises due to marginalization and the second due to the IS estimator. The IS error term will go to zero as $|D|\to\infty$, and the marginalization error will go to zero as horizon length $H_{\mu}\to\infty$. Note that having two different horizon lengths with $H \ll H_{\mu}$ is beneficial as the variance of standard IS estimators increases rapidly with $H$, but $H_{\mu}$ must be large enough for the empirical stationary distribution to reflect the true stationary distribution. 




\begin{proof}
Let $U^{\phi}_H$ be the value for finite time horizon $H$ and $\hat{U}^{\phi\dagger}_H$ be the value estimated by IS and then marginalized. The error can be decomposed as,

\small{
\begin{align*}
    &U^{\phi}(\sbar)-\hat{U}^{\phi\dagger}_H(\sbar),\\
     \begin{split}
    &=(U^{\phi}(\sbar)-U^{\phi}_H(\sbar))+(U^{\phi}_H(\sbar)-U^{\phi\dagger}_H(\sbar)),\\
    &+(U^{\phi\dagger}_H(\sbar)-\E\widehat{U}^{\phi\dagger}_H(\sbar))+(\E\widehat{U}^{\phi\dagger}_H(\sbar)-\widehat{U}^{\phi\dagger}_H(\sbar)),
    \end{split}\\
    &=\Delta_{H}+\Delta_{\mu}+B_{IS}+\Delta_{\text{IS}}.
\end{align*}}

\normalsize
\noindent where $\Delta_H$ is the error of a trajectory of length $H$ to estimate the infinite horizon reward, $\Delta_{\mu}$ is the error due to empirical marginalization, $B_{IS}$ is the upper bound of the (possible) bias of the IS estimator, and $\Delta_{\text{IS}}$ is the error of the selected IS estimator. For any value function $U^{\phi}$ it is known that,

\small
\begin{equation*}
    \lvert U^{\phi}(\sbar)-U^{\phi}_H(\sbar)\rvert \leq \rmax\gamma^H/(1-\gamma) = \tilde{r},
\end{equation*}

\normalsize
\noindent so $\Delta_H$ can be deterministically bounded. The stochastisity in the error therefore comes from the importance sampling step. With substitution, the triangle inequality, and noting that $\tilde{r}\geq 0$ and $B_{IS}\geq 0$,

\footnotesize\begin{align*}
&P(|\Delta_{\text{H}}+\Delta_{\mu}+B_{IS}+ \Delta_{\text{IS}}|\geq \epsilon)\\
    &\leq P(|\tilde{r}| + |B_{IS}|+ |\Delta_{\mu}+\Delta_{\text{IS}}|\geq\epsilon),\\
    &=P(|\Delta_{\mu}+\Delta_{\text{IS}}| \geq \epsilon-\tilde{r}-B_{IS}),\\
    &\leq P\left(|\Delta_{\mu}|\geq\frac{1}{2}(\epsilon-\tilde{r}-B_{IS})\right) + P\left(|\Delta_{IS}|\geq\frac{1}{2}(\epsilon-\tilde{r}-B_{IS})\right).
\end{align*}

\normalsize
Next, consider bounding $\Delta_{\mu}$.
\small\begin{align*}
    &|U^{\phi}_H(s)-U^{\phi\dagger}_H(s)| = \sum_{\alpha_{N}}\sum_{s_N}V^{\phi'}_H(s)\phi'(\alpha_N|s)|\mu(s_N)-\widehat{\mu}(s_N)|\\
    &\leq \max_{s_N}|\mu(s_N)-\widehat{\mu}(s_N)|\sum_{\alpha_{N}}\sum_{s_N}V^{\phi'}_H(s)\pi(\alpha_N|s).
\end{align*}
\normalsize
Therefore,
\small\begin{align*}
    &P(|\Delta_{\mu}|\geq \epsilon_0)\\
    &\leq P\left(d_{TV}\left(\mu,\widehat{\mu}\right)\geq \epsilon_0/\Big(\sum_{\alpha_{N}}\sum_{s_N}V^{\phi'}_H(s)\pi(\alpha_N|s)\Big)\right).
\end{align*}
\normalsize
With the assumption of unbiased estimators applied to i.i.d. trajectories, standard confidence inequalities may be used. The Hoeffding inequality, bounding $\max\widehat{U}^{\phi\dagger}_H(s)$ and $V_{H}^{\phi'}$, and rearranging terms complete the proof.
\end{proof}

In dealing with large state spaces where $|D|$ for each starting $s$ may be limited, first-visit Monte Carlo policy evaluation may be used to produce more unbiased sample returns. This method will draw on different trajectory lengths for each sample, requiring adjustments to the confidence interval. One technique may be to use first-visit or every-visit Monte Carlo only when the length of the trajectory after first visit exceeds some minimum amount. The confidence interval may then be calculated from the minimum allowed trajectory length $\bar{H}$.

A question here may be raised as to choose the best $\bar{H}$. Increasing $\bar{H}$ will exponentially reduce the error of $|U^{\phi}-U_H^{\phi}|$, but it will also exponentially increase the variance of most IS estimates. A design choice could therefore be selecting $\bar{H}$ to balance the bias and variance of the estimator. The variance term is comprised of variance of the Monte Carlo estimator which depends on the stochasticity of the trajectories, and the variance of the selected IS method. As upper bounds for variance of IS methods tend to be loose or formed via recursion, a closed form solution for $\bar{H}$ cannot be easily found. A natural solution is to save part of the training data to select an acceptable $\bar{H}$ by cross-validation. Sample trajectories may be truncated for different values of $H$, from which the MDP transformation and IS may be computed. The best $H$ may be used for the remaining training trajectories to estimate the final value. 

\section{Simulation} \label{exp}

This problem was simulated to evaluate multiple policies post-dropout. A system of $N$ agents was constructed with $|S_n|=3$ states possible for each agent. The pre-dropout system assigned an indicator reward  $r(s) = \sum_n \mathbf{1}(s_n=1)$ where each state was rewarded by the number of agents in substate 1. The post-dropout system instead used $\bar{r}(s) = \sum_n \mathbf{1}(s_n=2)$ which rewarded substate 2. 

As a ``proof of concept" to motivate the need for an improved policy post-dropout, see Figure \ref{optgap}, which compares the performance of policy $\E_{s_N,\alpha_N}\pi^*(\abold|s)$ to $\phi^*$ on the post-dropout system. Note that there is a large gap in received return between these two policies, showing that it is beneficial to solve for a new policy in the post-dropout regime. 

A dataset of the pre-dropout system under policy $\pi$ was collected where $|D|=1000$ and $H=500$. The doubly robust estimator was used to evaluate a randomly chosen policy, and the $95\%$ confidence bound was found. In Figure \ref{graph2}, the dashed line shows the estimated value of $\phi$, which closely approximates the displayed true value. In comparison, the Monte Carlo (MC) value estimate, averaged over $10$ trajectories, is formed by directly observing the new system under $\phi$. It can be seen that to achieve a MC estimate within the bounds of the DR estimate, trajectories of at least $H'\approx 245$ samples would be needed, with multiple trajectories desired to reduce variance. 

This simulation shows how importance sampling techniques can be applied to transformed returns to evaluate the goodness of policies for unobserved systems. The estimate was constructed from samples purely taken from $\mcal$, whereas any MC evaluation needed to observe the new system $\mm$. The returned confidence bound means that the central controller can exert policy $\phi$ and expect to receive an infinite-horizon value within the specified range; in MC evaluation there are no such guarantees, meaning the controller could cause risk exerting a dangerous policy on the system. In addition, this evaluation was done with a new reward function that shared no similarity to the original reward, allowing for great flexibility in application. 

\begin{figure}
    \centering
    \includegraphics[width=0.8\linewidth]{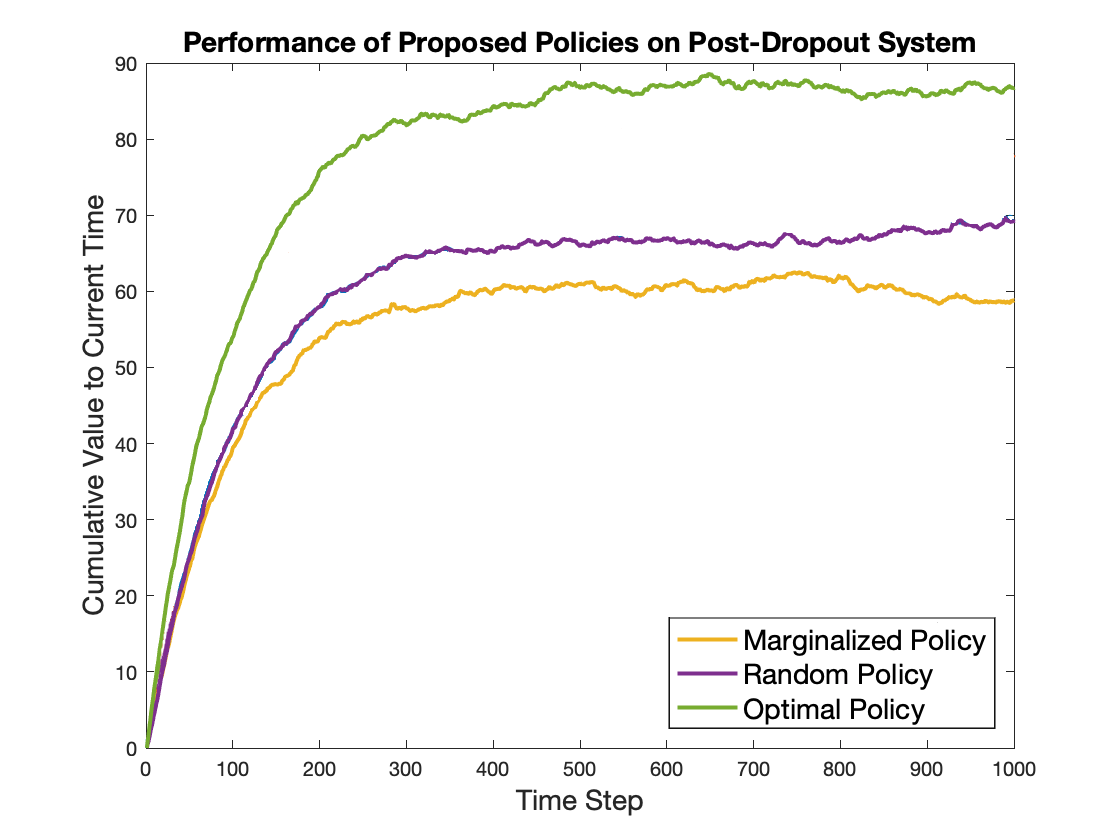}
    \caption{MC policy evaluation showing the performance of the marginalized policy $\E_{s_N,\alpha_N}\pi^*(\abold|s)$ that was optimal for $\mathcal{M}$, the true optimal policy $\phi^*$, and a randomly selected policy. Note the optimality gap; this motivates that $\E_{s_N,\alpha_N}\pi^*(\abold|s)$ may not suffice as a good policy in the case of dropout.  }
    \label{optgap}
\end{figure}
\begin{figure}
    \centering
    \includegraphics[width=0.8\linewidth]{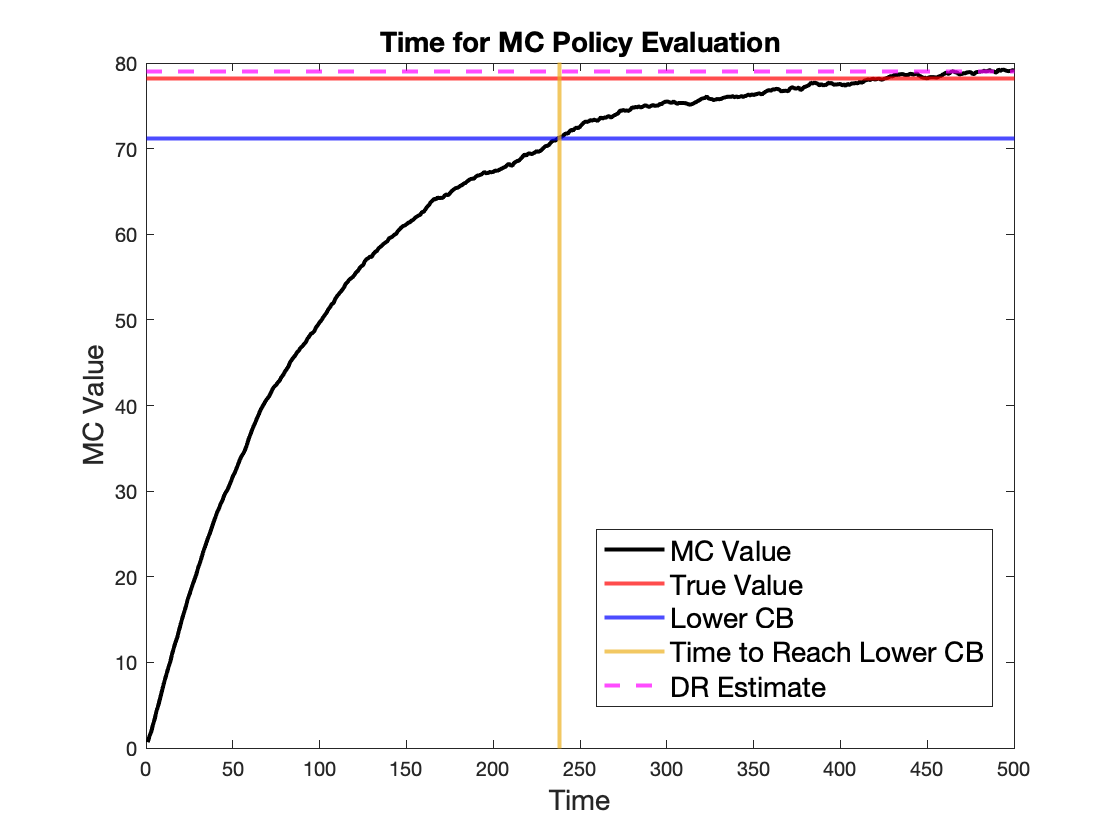}
    \caption{MC policy evaluation convergence shown versus the DR estimator and its lower $95\%$ confidence bound for a randomly chosen policy $\phi$. The true value was $78.21$, the estimated value was $79.01$, and the $95\%$ confidence bound was $\pm 6.98$. Note that the DR estimate is constructed from samples of $\mcal$, but MC policy evaluation needs samples from $\mm$. }
    \label{graph2}
\end{figure}

\section{Conclusion}
This work presented a method for high confidence policy evaluation for factored MDPs under the case of node dropout where the effects of the dropped agent are marginalized. This method allows policies to be evaluated for the post-dropout system using only observations of the pre-dropout system. We show that standard policy IS cannot be directly applied as the pre- and post- dropout systems are different MDPs, but the specific structural properties of this problem enable a modified version of IS to be used. We demonstrate how to modify policy IS for this setting, and present a confidence bound on the resulting estimator. This means that good policies for the post-dropout system maybe precomputed, or policies for the pre-dropout system may be analyzed for their robustness to dropout. Future work can build upon this work to accomplish the policy search step.




\balance
\bibliographystyle{IEEEtran}
\bibliography{IEEEabrv,tocite}

\addtolength{\textheight}{-5cm}
\end{document}